\newtheorem{theorem}{Theorem}
\newtheorem{prop}{Proposition}
\newcommand{\ep}{\varepsilon}
\begin{document}

\title{Generalization of Doob's Inequality and  A Tighter Estimate on Lookback Option Price}
\author{Jian Sun\\Fudan University}

\begin{abstract}
In this short note, we will  strengthen the classic Doob's $L^p$
inequality for sub-martingale processes.  Because this inequality
is of fundamental importance to the theory of stochastic process,
we believe  this generalization will find many interesting
applications.
\end{abstract}

\thanks{The author would like to thank Professor Steven Shreve
at Carnegie Mellon University and Professor Jiangang Ying at Fudan University. The author also would like to thank
Peter Carr and Bruno Dupire for their inspirational discussions.}

\maketitle

\section{Introduction}

Doob's maximum inequality for the sub-martingale process has
played an important role in the stochastic process theory. It has
become a standard result which appears in almost every
introductory text in this subject. Let $\{X_{t}\}$ be be a process
defined on a probability space with a filtration
$\mathcal{F}_{t}$. Its maximum is defined by

\begin{align*}
M_{t}&=\max _{0\le s\le t} X_{s}\\
\end{align*}
By the definition we have $M_{0}=X_{0}$. If $X_{t}$ is a positive
continuous sub-martingale, the Doob's $L^p$ inequality states that
\begin{equation}
\|M_{T}\|_{p} \le \frac{p}{p-1} \|X_{T}\|_{p}.
\end{equation}
In particular, when $p=2$, we have
\begin{equation}\label{doob.1}
\|M_{T}\| _{2} \le 2 \|X_{T}\| _{2}.
\end{equation}
Even though the coefficient on the right hand side  is not important for the purpose
of establishing the finiteness of the $L^2$ integrability of
$M_{T}$, it may become important for some other applications. For
example, when $X_{T}$ is a martingale with $X_{0}=0$, we
infer from~\eqref{doob.1}
\begin{equation}
E(M_{T})\le E(M_{T}^2)^{\frac{1}{2}} \le 2 E(X_{T}^2)^{\frac{1}{2}}.
\end{equation}
This provides an estimate on the expectation of the Maximum
$M_{T}$ in terms of the standard deviation of $X_{T}$. In fact, we
will see later that we can have a much tighter estimate
\begin{equation}
E(M_{T})\le  E(X_{T}^2)^{\frac{1}{2}},
\end{equation}
for any continuous martingale with $X_{0}=0$. When $X_{0}\neq 0$
 we will have
\begin{equation}
E(M_{T})\le \sqrt{2}E(X_{T}^2)^{\frac{1}{2}}.
\end{equation}
In either case, we  obtain a  better result. In Finance,
people usually use martingales to model the stock prices or any
other tradable assets. Their maximum $M_{T}$ sometimes represent
payoff of certain derivatives. The inequality above actually gives
a good estimate of this derivative payoff in terms of European
type option prices. In this area, the magnitude of the coefficient
matters a lot to the applications.

In general when $p>2$, Doob's inequality is equivalent to
\begin{equation}
E(M_{T}^p) \le \left(\frac{p}{p-1}\right)^{p} E(X_{T}^p),
\end{equation}
and we are going to strengthen this inequality to
\begin{equation}
E(M_{T}^{p})+\frac{p}{p-1}X_{0}^{p}\le \left(
\frac{p}{p-1}\right)^{p} E(X_{T}^{p})
\end{equation}
by adding a term of initial position $X_{0}$.

Our method is to first prove an identity which will involve
$X_{t}$ and $M_{t}$. From this identity we will use the standard
methods to derive our inequalities. Because our starting point is
an identity rather than an estimates, we could prove a tighter
inequality. Our methodology also provides a totally different
proof of Doob's maximum inequality.

\section{Classic Results}

For completeness and comparison, we state and prove the classic
maximal inequality in this section.

\begin{theorem}[Doob's Inequality ]
Let $X_{t}$ be an nonnegative martingale process. For any real
$a>0$ we have
\begin{equation}\label{ineq:doob.3}
a P\left( M_{T}\ge a\right) \le E\left(X_{T}1_{M_{T}\ge a}\right)
\end{equation}
\end{theorem}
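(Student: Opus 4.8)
The plan is to use the \emph{optional sampling theorem} together with the hitting time of level $a$. Define the stopping time
\begin{equation*}
\tau = \inf\{t \geq 0 : X_t \geq a\},
\end{equation*}
with the convention $\inf \emptyset = +\infty$. The first observation is that the two events coincide: $\{M_T \geq a\} = \{\tau \leq T\}$, since the running maximum reaches $a$ by time $T$ precisely when the process first hits level $a$ at some time $\tau \leq T$. Moreover, by (right-)continuity of the paths we have $X_\tau \geq a$ on the event $\{\tau \leq T\}$.

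Next I would apply optional sampling to the bounded stopping time $\tau \wedge T$. Since $X_t$ is a martingale and $\tau \wedge T \leq T$ is bounded, the optional sampling theorem gives
\begin{equation*}
E\bigl(X_T \mid \ma{F}_{\tau \wedge T}\bigr) = X_{\tau \wedge T}.
\end{equation*}
The event $\{\tau \leq T\}$ is $\ma{F}_{\tau \wedge T}$-measurable, so multiplying by its indicator and taking expectations yields the identity
\begin{equation*}
E\bigl(X_T 1_{\tau \leq T}\bigr) = E\bigl(X_{\tau \wedge T} 1_{\tau \leq T}\bigr) = E\bigl(X_\tau 1_{\tau \leq T}\bigr).
\end{equation*}

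Finally I would combine these facts. On $\{\tau \leq T\}$ we have $X_\tau \geq a$, hence $a\, 1_{\tau \leq T} \leq X_\tau 1_{\tau \leq T}$. Taking expectations and using the identity above,
\begin{equation*}
a P(M_T \geq a) = a P(\tau \leq T) \leq E\bigl(X_\tau 1_{\tau \leq T}\bigr) = E\bigl(X_T 1_{\tau \leq T}\bigr) = E\bigl(X_T 1_{M_T \geq a}\bigr),
\end{equation*}
which is exactly~\eqref{ineq:doob.3}.

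The main delicate point is the justified use of optional sampling: one must check that $\tau \wedge T$ is genuinely bounded (which it is, by $T$), so that no uniform integrability hypothesis beyond the standing $L^1$ assumption is needed, and that the martingale property at the stopping time holds in the stated conditional form. The only other subtlety is verifying $X_\tau \geq a$ on $\{\tau \leq T\}$; for continuous paths this is immediate, since the process crosses level $a$ continuously (in fact $X_\tau = a$ when $X_0 < a$), while for merely right-continuous paths it follows from right-continuity at the hitting time together with the definition of $\tau$ as an infimum.
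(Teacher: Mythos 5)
Your proof is correct and takes essentially the same route as the paper: both introduce the hitting time $\tau=\inf\{t: X_t\ge a\}$, use optional sampling for the stopped martingale $X_{T\wedge\tau}$, and exploit $X_\tau\ge a$ on $\{\tau\le T\}=\{M_T\ge a\}$. The paper merely packages these same ingredients into a single pointwise inequality before taking expectations, whereas you spell out the conditional optional sampling step explicitly.
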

\begin{proof}
 We define the  stopping time
\begin{equation}
\tau =\inf \{t: X_{t}\ge a\}.
\end{equation}
and we claim to have
\begin{equation}
1_{\tau \le T} +\frac{X_{T}-X_{T\wedge \tau}}{a}\le
\frac{X_{T}1_{M_{T}\ge a}}{a}.
\end{equation}
It is obvious to check its validity. Take the expectation and use
the fact that $X_{T\wedge\tau}$ is also a martingale, we get the
result.
\end{proof}

\begin{theorem}[Maximal inequality]
For the nonnegative martingale process,  We have the following
$L^{p}$ norm inequality: for any $p>1$,
\begin{equation}\label{ineq:doob.maximum.1}
\|M_{T}\|_{p} \le \frac{p}{p-1} \|X_{T}\|_{p}
\end{equation}
\end{theorem}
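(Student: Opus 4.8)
The plan is to bootstrap the strong $L^{p}$ bound from the weak-type estimate \eqref{ineq:doob.3} proved in the previous theorem, using the layer-cake (distribution function) representation of the $p$-th moment together with Hölder's inequality. First I would write the $p$-th moment of the maximum in distributional form, valid for any nonnegative random variable,
\begin{equation*}
E(M_{T}^{p}) = \int_{0}^{\infty} p\, a^{p-1} P(M_{T}\ge a)\, da,
\end{equation*}
and then substitute the weak-type bound $P(M_{T}\ge a)\le \tfrac{1}{a}E(X_{T}1_{M_{T}\ge a})$ coming directly from \eqref{ineq:doob.3}, obtaining
\begin{equation*}
E(M_{T}^{p}) \le \int_{0}^{\infty} p\, a^{p-2} E(X_{T}1_{M_{T}\ge a})\, da.
\end{equation*}

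Next, since the integrand is nonnegative, I would invoke Tonelli's theorem to interchange the $a$-integral with the expectation. The indicator $1_{M_{T}\ge a}$ confines the inner integration to $0\le a\le M_{T}$, and $\int_{0}^{M_{T}} p\, a^{p-2}\, da = \tfrac{p}{p-1}M_{T}^{p-1}$, which yields the pivotal bound
\begin{equation*}
E(M_{T}^{p}) \le \frac{p}{p-1}\, E\!\left(X_{T} M_{T}^{p-1}\right).
\end{equation*}
I would then apply Hölder's inequality with conjugate exponents $p$ and $q=p/(p-1)$, giving $E(X_{T}M_{T}^{p-1}) \le \|X_{T}\|_{p}\,\|M_{T}^{p-1}\|_{q} = \|X_{T}\|_{p}\,\|M_{T}\|_{p}^{p-1}$. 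Since $E(M_{T}^{p})=\|M_{T}\|_{p}^{p}$, this reads $\|M_{T}\|_{p}^{p}\le \tfrac{p}{p-1}\|X_{T}\|_{p}\,\|M_{T}\|_{p}^{p-1}$, and formally dividing through by $\|M_{T}\|_{p}^{p-1}$ produces the claimed inequality.

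The main obstacle is precisely this final division, which is only legitimate when $\|M_{T}\|_{p}$ is finite and nonzero. To handle it cleanly I would rerun the whole argument with $M_{T}$ replaced by the bounded truncation $M_{T}\wedge n$. Using that $\{M_{T}\wedge n\ge a\}$ equals $\{M_{T}\ge a\}$ for $a\le n$ and is empty for $a>n$, the layer-cake integral runs only over $0\le a\le n$, and the Tonelli step produces $\tfrac{p}{p-1}E(X_{T}(M_{T}\wedge n)^{p-1})$. Because $E((M_{T}\wedge n)^{p})\le n^{p}<\infty$, the division is now fully justified and gives $\|M_{T}\wedge n\|_{p}\le \tfrac{p}{p-1}\|X_{T}\|_{p}$, with a right-hand side that does not depend on $n$. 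Letting $n\to\infty$ and applying the monotone convergence theorem to the left-hand side delivers the bound in full generality; the degenerate case $\|M_{T}\|_{p}=0$ is trivial, and the remaining steps (the layer-cake formula and the Tonelli interchange) are routine given the nonnegativity of $X_{T}$ and $M_{T}$.
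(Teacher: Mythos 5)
Your argument is the same as the paper's classical proof: the layer-cake representation of $E(M_{T}^{p})$, substitution of the weak-type bound \eqref{ineq:doob.3}, Tonelli to reach $E(M_{T}^{p})\le \frac{p}{p-1}E(X_{T}M_{T}^{p-1})$, and then H\"{o}lder with exponents $p$ and $p/(p-1)$. The only difference is that you justify the final division by $\|M_{T}\|_{p}^{p-1}$ via truncation and monotone convergence, a point the paper passes over with ``further simplify this inequality''; this is a welcome tightening but not a different proof.
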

\begin{proof}[Classical proof]
This is based on the Doob's inequality. Use the standard measure
theory and the Doob's inequality,
\begin{align*}
E\left(M_{T}^{p}\right) &=\int_{0}^{\infty} p \,x^{p-1}
P\left(M_{T}>x\right)\, dx\\
&\le \int _{0}^{\infty} p \,x^{p-2} E\left(X_{T}1_{M_{T}\ge
x}\right)\, dx\\
&=E\left(\int _{0}^{\infty} p\,x^{p-2}X_{T}1_{M_{T}\ge x}
\,dx\right)\\
&=E\left(\int_{0}^{M_{T}}p\,x^{p-2}X_{T}\,dx\right)\\
&=\frac{p}{p-1} E\left(M_{T}^{p-1}X_{T}\right)\\
&\le \frac{p}{p-1}
E\left(M_{T}^{p}\right)^{(p-1)/p}E\left(X_{T}^{p}\right)^{1/p}
\end{align*}
Further simplify this inequality we will get the result. Please
note that we have used H\"{o}lder inequality
\begin{equation}
E(M_{T}^{p-1}X_{T})\le
E(M_{T}^{p})^{\frac{p-1}{p}}E(X_{T}^{p})^{\frac{1}{p}}
\end{equation}
in this proof.
\end{proof}

\section{A New Inequality}
We will try to strengthen the maximal inequality proved in the
previous section. First we prove an identity.
\begin{theorem}\label{th:identity.1}
Let $X_{t}$ be a nonnegative continuous martingale. For any $p>0$,
if
\begin{equation}
\int _{0}^{T} M_{t}^{2p}d[X,X]_t < \infty,
\end{equation} we have the following identity:
\begin{equation}\label{eq:jian.main}
E\left(X_{T}M_{T}^{p}\right)=\frac{p}{p+1}E(M_{T}^{p+1})+\frac{1}{p+1}X_{0}^{p+1}.
\end{equation}
\end{theorem}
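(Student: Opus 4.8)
The plan is to prove the identity by applying the Itô–Tanaka / integration-by-parts machinery to the process $M_t^{p+1}$ and the product $X_t M_t^p$, exploiting the crucial fact that the maximum process $M_t$ increases only on the (Lebesgue-null) set of times where $X_t=M_t$. First I would compute the differential of $M_t^{p+1}$. Since $M_t$ is a continuous, nondecreasing, finite-variation process, ordinary calculus applies and gives
\begin{equation}
d\bigl(M_t^{p+1}\bigr)=(p+1)M_t^{p}\,dM_t,
\end{equation}
with no quadratic-variation correction because $[M,M]_t=0$. Integrating from $0$ to $T$ yields $M_T^{p+1}-X_0^{p+1}=(p+1)\int_0^T M_t^p\,dM_t$, where I have used $M_0=X_0$.

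The heart of the argument is to relate $\int_0^T M_t^p\,dM_t$ to $E(X_T M_T^p)$. Here I would apply Itô's formula to the product $X_t M_t^p$, treating $X_t$ as the continuous martingale and $M_t^p$ as a finite-variation term. Because $M_t^p$ has zero quadratic variation, the cross-variation term $d[X,M^p]_t$ vanishes, so
\begin{equation}
d\bigl(X_t M_t^p\bigr)=M_t^p\,dX_t+X_t\,d\bigl(M_t^p\bigr).
\end{equation}
The essential observation is that $dM_t$ is supported on $\{t:X_t=M_t\}$, so in the second term I may replace $X_t$ by $M_t$ inside the finite-variation integral, giving $X_t\,d(M_t^p)=M_t\,d(M_t^p)=\frac{p}{p+1}\,d(M_t^{p+1})$ after the same finite-variation chain rule. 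Integrating this identity over $[0,T]$ and taking expectations, the stochastic integral $\int_0^T M_t^p\,dX_t$ has zero expectation (this is where the integrability hypothesis $\int_0^T M_t^{2p}\,d[X,X]_t<\infty$ enters, guaranteeing it is a genuine martingale rather than merely a local one). This leaves
\begin{equation}
E\bigl(X_T M_T^p\bigr)-X_0^{p+1}=\frac{p}{p+1}\,E\bigl(M_T^{p+1}-X_0^{p+1}\bigr),
\end{equation}
which rearranges directly into the claimed identity~\eqref{eq:jian.main}.

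I expect the main obstacle to be making rigorous the substitution $X_t\,dM_t=M_t\,dM_t$, i.e.\ the claim that the measure $dM_t$ charges only the set where $X_t=M_t$. This is geometrically obvious but requires justification: $M$ increases precisely at the running-maximum times, and at such a time $t$ one has $M_t=\max_{s\le t}X_s=X_t$ by continuity of $X$. A clean way to handle this is to note that $\int_0^T(M_t-X_t)\,dM_t=0$ almost surely, which follows from the support property of the Stieltjes measure $dM_t$. The secondary technical point is the integrability condition: I would verify that under the stated hypothesis the integrand $M_t^p$ lies in the appropriate space so that $\int_0^T M_t^p\,dX_t$ is a true martingale with vanishing expectation, rather than appealing to a localization argument that might leave boundary terms. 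For $p<1$ some care with the differentiability of $x\mapsto x^{p+1}$ near the origin may also be needed, but since $X$ is nonnegative and the chain rule is applied only to the finite-variation process $M$, this causes no real difficulty.
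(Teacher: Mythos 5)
Your proposal is correct and follows essentially the same route as the paper: integration by parts on $X_tM_t^p$, the observation that the measure $dM_t$ is carried by $\{t: X_t=M_t\}$ so that $X_t\,dM_t$ may be replaced by $M_t\,dM_t$, and the vanishing expectation of the stochastic integral $\int_0^T M_t^p\,dX_t$ under the stated integrability hypothesis. Your version is somewhat more careful than the paper's about justifying the support property of $dM_t$ and the true-martingale property of the stochastic integral, but the underlying argument is identical.
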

\begin{proof}
We consider the following differential identity:
\begin{equation}
d(X_{t}M_{t}^{p})=M_{t}^{p}dX_{t}+pM_{t}^{p-1}X_{t}dM_{t}.
\end{equation}
Written in the integral term and make use the observation
\begin{equation*}
dM_{t}\neq 0 \Rightarrow X_{t}=M_{t},
\end{equation*}
we have
\begin{equation}
X_{T}M_{T}^{p}-X_{0}^{p+1} =\int_{0}^{T} M_{t}^{p}dX_{t}+\int
_{0}^{T} pM_{t}^{p}dM_{t}.
\end{equation}
Take the expectation and use the fact that $X_{t}$ is a
martingale, therefore
\begin{equation}
E\left(\int_{0}^{T} M_{t}^{p}dX_{t}\right)=0, \end{equation} we
have
\begin{equation*}
E\left(X_{T}M_{T}^{p}\right)=\frac{p}{p+1}E(M_{T}^{p+1})+\frac{1}{p+1}X_{0}^{p+1}.
\end{equation*}
This finishes the proof.
\end{proof}

\begin{theorem}\label{th:jian.main.p=2}
Let $X_{t}$ be a  continuous martingale, if
\begin{equation}
\int _{0}^{T} M_{t}^{2}d[X,X]_t < \infty,
\end{equation} we have the following identity:
\begin{equation}
E\left(X_{T}M_{T}\right)=\frac{1}{2}E(M_{T}^{2})+\frac{1}{2}X_{0}^{2}.
\end{equation}
\end{theorem}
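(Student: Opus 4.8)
The plan is to recognize the assertion as precisely the $p=1$ instance of the identity in Theorem~\ref{th:identity.1}, and to observe that the nonnegativity hypothesis imposed there is never actually used once $p=1$: it was needed only to make the fractional power $M_t^{p-1}$ meaningful in the general case. I would therefore re-run the same It\^o computation directly for the product $X_tM_t$, now placing no sign restriction on $X$.

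First I would record that $t\mapsto M_t$ is continuous and nondecreasing, hence of finite variation, so the covariation $[X,M]$ vanishes and the integration-by-parts formula collapses to the classical product rule
\begin{equation*}
d(X_tM_t)=M_t\,dX_t+X_t\,dM_t.
\end{equation*}
The decisive structural fact is that the Stieltjes measure $dM_t$ is carried by the random set of times at which the running maximum is attained, that is, $dM_t\neq 0$ forces $X_t=M_t$; consequently $X_t\,dM_t=M_t\,dM_t$, and since $M$ is continuous, a pathwise computation gives $\int_0^T M_t\,dM_t=\tfrac12(M_T^2-M_0^2)$. Using $M_0=X_0$ and integrating the product rule from $0$ to $T$ yields the pathwise identity
\begin{equation*}
X_TM_T=\tfrac12 X_0^2+\tfrac12 M_T^2+\int_0^T M_t\,dX_t.
\end{equation*}

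It then remains to take expectations and argue that the stochastic integral contributes nothing. This is the only genuine obstacle: $\int_0^{\cdot} M_t\,dX_t$ is a priori merely a local martingale, so to conclude $E\!\left(\int_0^T M_t\,dX_t\right)=0$ I would invoke the hypothesis $\int_0^T M_t^2\,d[X,X]_t<\infty$, which is exactly the $L^2$ (It\^o-isometry) condition guaranteeing that this integral is a genuine square-integrable martingale started at $0$. With the expectation of the integral eliminated, rearranging the displayed identity gives
\begin{equation*}
E(X_TM_T)=\tfrac12 E(M_T^2)+\tfrac12 X_0^2,
\end{equation*}
as claimed. I expect the subtle points to lie entirely in the justification of the martingale property, namely establishing the integrability that promotes the local martingale to a true one, rather than in the algebra, which is identical to the $p=1$ case already handled in Theorem~\ref{th:identity.1}.
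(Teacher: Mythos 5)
Your proof is correct and takes essentially the same approach as the paper, which disposes of this theorem with the one-line remark that the argument of Theorem~\ref{th:identity.1} applies verbatim for $p=1$ since nonnegativity of $X_t$ is no longer needed. Your write-up is in fact more careful than the paper's, spelling out the vanishing of $[X,M]$, the support property of $dM_t$, and the role of the hypothesis $\int_0^T M_t^2\,d[X,X]_t<\infty$ in making the stochastic integral a true martingale.
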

\begin{proof}
The proof is the same as above since when $p=1$, we don't require
$X_{t}$ to be positive anymore.
\end{proof}

\begin{theorem}
Let $X_{t}$ be a nonnegative continuous sub-martingale. For any
$p>0$, if
\begin{equation}
\int _{0}^{T} M_{t}^{2p}d[X,X]_t < \infty,
\end{equation}
we have the following inequality:
\begin{equation}\label{ineq:jian8}
E\left(X_{T}M_{T}^{p}\right)\ge
\frac{p}{p+1}E(M_{T}^{p+1})+\frac{1}{p+1}X_{0}^{p+1}.
\end{equation}
\end{theorem}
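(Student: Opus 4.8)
The plan is to mirror the proof of Theorem~\ref{th:identity.1}, replacing the single appeal to the martingale property by the Doob--Meyer decomposition. Since $X_t$ is a nonnegative continuous sub-martingale, I would write $X_t = N_t + A_t$, where $N_t$ is a continuous local martingale with $N_0 = X_0$ and $A_t$ is a continuous increasing process with $A_0 = 0$. The running maximum $M_t$ still increases only when $X_t = M_t$, so the observation $dM_t \neq 0 \Rightarrow X_t = M_t$ remains valid, and the integration-by-parts computation from Theorem~\ref{th:identity.1} carries over unchanged up to the point where expectations are taken.

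Concretely, first I would apply the product rule and the above observation exactly as before to obtain
\begin{equation*}
X_T M_T^p - X_0^{p+1} = \int_0^T M_t^p \, dX_t + \frac{p}{p+1}\left(M_T^{p+1} - X_0^{p+1}\right),
\end{equation*}
having already evaluated $\int_0^T p M_t^p \, dM_t = \frac{p}{p+1}(M_T^{p+1} - X_0^{p+1})$ using $M_0 = X_0$. Next I would substitute $dX_t = dN_t + dA_t$ into the remaining integral, giving
\begin{equation*}
\int_0^T M_t^p \, dX_t = \int_0^T M_t^p \, dN_t + \int_0^T M_t^p \, dA_t.
\end{equation*}
Taking expectations, the martingale term $E\left(\int_0^T M_t^p \, dN_t\right)$ vanishes, while the drift term $E\left(\int_0^T M_t^p \, dA_t\right) \ge 0$ because $M_t^p \ge 0$ and $A_t$ is increasing. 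Rearranging the resulting inequality
\begin{equation*}
E\left(X_T M_T^p\right) - X_0^{p+1} \ge \frac{p}{p+1}\left(E(M_T^{p+1}) - X_0^{p+1}\right)
\end{equation*}
then yields~\eqref{ineq:jian8}. Note that the equality of Theorem~\ref{th:identity.1} is recovered precisely when the drift term is zero, that is, in the martingale case.

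The main obstacle is justifying that $E\left(\int_0^T M_t^p \, dN_t\right) = 0$, namely that the stochastic integral against the martingale part is a true martingale rather than merely a local one. This is where the hypothesis $\int_0^T M_t^{2p} \, d[X,X]_t < \infty$ enters: because $A_t$ has finite variation, $[X,X] = [N,N]$, so the integrability condition controls exactly the bracket $\int_0^T M_t^{2p} \, d[N,N]_t$ that certifies $\int_0^T M_t^p \, dN_t$ to be a square-integrable martingale with zero mean. The only additional care needed compared to Theorem~\ref{th:identity.1} is to confirm that the Doob--Meyer decomposition is available in continuous form and that the finite-variation part $A_t$ does not disturb the quadratic variation, both of which are standard for continuous sub-martingales.
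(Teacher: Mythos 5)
Your proposal is correct and is essentially the paper's own argument: the paper likewise repeats the computation of Theorem~\ref{th:identity.1} and reduces the claim to showing $E\bigl(\int_0^T M_t^p\,dX_t\bigr)\ge 0$, which it justifies by asserting that this integral is itself a sub-martingale. Your Doob--Meyer decomposition into $N_t + A_t$ simply makes that one-line assertion explicit (martingale part has zero mean under the stated integrability condition, increasing part contributes nonnegatively), so it is a more detailed write-up of the same route rather than a different one.
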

\begin{proof}
We basically follow the proof in Theorem~\eqref{th:identity.1}. We
notice that when $X_{t}$ is a sub-martingale,
\begin{equation}
\int_{0}^{T} M_{t}^{p}dX_{t}
\end{equation}
is also a sub-martingale so
\begin{equation}
E\left(\int_{0}^{T} M_{t}^{p}dX_{t}\right)\ge 0
\end{equation}
and this will finish the proof.
\end{proof}

\begin{theorem}[Generalization of Doob's maximal inequality]
For a nonnegative continuous sub-martingale process, if
\begin{equation}
\int _{0}^{T} M_{t}^{2p}d[X,X]_t < \infty,
\end{equation}
we then have
\begin{equation}\label{ineq:jian.doob}
E(M_{T}^{p+1})+\frac{p+1}{p}X_{0}^{p+1}\le \left(
\frac{p+1}{p}\right)^{p+1} E(X_{T}^{p+1})
\end{equation}
for any $p>0$.
\end{theorem}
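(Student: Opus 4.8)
The plan is to sandwich the quantity $E(X_T M_T^p)$ between two bounds and then reduce the resulting relation to an elementary inequality. From the sub-martingale inequality~\eqref{ineq:jian8} proved just above I already have the lower bound
\[
E\left(X_T M_T^p\right) \ge \frac{p}{p+1} E(M_T^{p+1}) + \frac{1}{p+1} X_0^{p+1}.
\]
For the matching upper bound I would apply H\"older's inequality to the product $X_T \cdot M_T^p$ with the conjugate exponents $p+1$ and $\frac{p+1}{p}$ (both admissible since $p>0$), which gives
\[
E\left(X_T M_T^p\right) \le E(X_T^{p+1})^{\frac{1}{p+1}}\, E(M_T^{p+1})^{\frac{p}{p+1}}.
\]
Chaining these two displays produces a single relation between $E(M_T^{p+1})$, $E(X_T^{p+1})$, and $X_0^{p+1}$, which is the object I will work with.

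To compress notation, write $A = E(M_T^{p+1})$, $B = E(X_T^{p+1})$, and $C = X_0^{p+1}$. The combined relation reads
\[
\frac{p}{p+1} A + \frac{1}{p+1} C \le A^{\frac{p}{p+1}} B^{\frac{1}{p+1}}.
\]
Since the left-hand side equals $\frac{p}{p+1}\bigl(A + \tfrac1p C\bigr)$, I would raise both sides to the power $p+1$ and divide by $A^p$ (the degenerate case $A=0$, which forces $X_0=0$ and hence $X_s\equiv 0$ on $[0,T]$, makes~\eqref{ineq:jian.doob} trivial), obtaining
\[
\left(\frac{p+1}{p}\right)^{p+1} B \ge A\left(1 + \frac{1}{p}\cdot \frac{C}{A}\right)^{p+1}.
\]
Setting $t = C/A \ge 0$, the target inequality~\eqref{ineq:jian.doob} amounts to showing that the right-hand side is at least $A\bigl(1 + \frac{p+1}{p}\, t\bigr)$, i.e. $\bigl(1 + \tfrac1p t\bigr)^{p+1} \ge 1 + \tfrac{p+1}{p}\, t$. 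This is exactly Bernoulli's inequality applied with base $1 + \tfrac{t}{p}\ge 1$ and exponent $p+1 \ge 1$, so the conclusion follows at once.

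The genuinely delicate point, rather than the algebra, is the justification of the H\"older step and the passage to finite quantities: I need $A = E(M_T^{p+1})$ and $B = E(X_T^{p+1})$ to be finite so that the H\"older bound is meaningful and the rearrangement (raising to the $(p+1)$-st power, dividing by $A^p$) is reversible. The hypothesis $\int_0^T M_t^{2p}\,d[X,X]_t < \infty$ is what I would invoke to guarantee that the stochastic integral $\int_0^T M_t^p\,dX_t$ is a bona fide sub-martingale with nonnegative expectation, which is what underwrites~\eqref{ineq:jian8} in the first place; together with the standing integrability this closes the finiteness gaps. I expect the main obstacle to be bookkeeping these conditions cleanly, since once the combined relation is in hand the remainder is a one-line appeal to Bernoulli.
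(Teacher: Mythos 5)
Your proof is correct, and its skeleton is the same as the paper's: both arguments sandwich $E(X_T M_T^p)$ between the lower bound~\eqref{ineq:jian8} and the H\"older upper bound $E(X_T^{p+1})^{1/(p+1)}E(M_T^{p+1})^{p/(p+1)}$. Where you diverge is only in the elementary step that converts this sandwich into~\eqref{ineq:jian.doob}. The paper inserts a free parameter $\ep$, splits the H\"older bound via the weighted arithmetic--geometric mean inequality into $\frac{\ep^{-p}}{p+1}E(X_T^{p+1})+\frac{p\ep}{p+1}E(M_T^{p+1})$, rearranges, and then minimizes over $\ep$, finding the optimizer $\ep=p/(p+1)$. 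You instead raise the combined relation to the power $p+1$, divide by $E(M_T^{p+1})^p$, and reduce the target to Bernoulli's inequality $(1+t/p)^{p+1}\ge 1+\frac{p+1}{p}\,t$. The two finishes are equivalent in content --- the optimal $\ep$ in the paper is exactly the choice that makes the Young step tight, which is the sharpness that Bernoulli encodes --- but yours avoids the optimization and makes the equality case transparent; your disposal of the degenerate case $A=0$ is also fine. One shared caveat: both arguments tacitly require $E(M_T^{p+1})<\infty$ in order to rearrange (the paper subtracts $\frac{p\ep}{p+1}E(M_T^{p+1})$ from both sides, you divide by $A^p$), and strictly speaking this calls for a truncation or localization argument that neither proof supplies.
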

\begin{proof}
We use the H\"{o}lder inequality: \begin{equation}
E(M_{T}^{p}X_{T})\le
E(X_{T}^{p+1})^{\frac{1}{p+1}}\,E(M_{T}^{p+1})^{\frac{p}{p+1}}.
\end{equation}
For any $0< \ep <1$, we can write
\begin{equation}
E(M_{T}^{p}X_{T})\le \left(\ep ^{-p}
E(X_{T}^{p+1})\right)^{\frac{1}{p+1}}\,\left(\ep
E(M_{T}^{p+1})\right)^{\frac{p}{p+1}}.
\end{equation}
Again, we use the H\"{o}lder inequality
\begin{equation}
a^{\frac{1}{p+1}}b^{\frac{p}{p+1}}\le
\frac{1}{p+1}a+\frac{p}{p+1}b
\end{equation}
to get
\begin{equation}
E(M_{T}^{p}X_{T})\le \frac{\ep ^{-p}}{p+1} E(X_{T}^{p+1}) +
\frac{p\,\ep}{p+1} E(M_{T}^{p+1}).
\end{equation}
Now use the inequality~\eqref{ineq:jian8},
\begin{equation*}
\frac{p}{p+1}E(M_{T}^{p+1})+\frac{1}{p+1}X_{0}^{p+1} \le \frac{\ep
^{-p}}{p+1} E(X_{T}^{p+1}) + \frac{p \, \ep}{p+1} E(M_{T}^{p+1}).
\end{equation*}
Rearranging the terms,
\begin{equation}\label{ineq:jian.4}
E(M_{T}^{p+1}) +\frac{1}{p(1-\ep)}X_{0}^{p+1} \le \frac{p+1}{p}
\frac{1}{\ep ^{p}(1-\ep)(p+1)}E(X_{T}^{p+1}).
\end{equation}
Now we minimize the function
\begin{equation}
\min _{0<\ep <1} \frac{1}{\ep ^{p}(1-\ep)(p+1)}
=\left(\frac{p+1}{p}\right)^{p},
\end{equation}
the equality takes place when $\ep=p/(p+1)$. Put everything back
into (\ref{ineq:jian.4}), we get
\begin{equation*}
E(M_{T}^{p+1})+\frac{p+1}{p}X_{0}^{p+1}\le \left(
\frac{p+1}{p}\right)^{p+1} E(X_{T}^{p+1}).
\end{equation*}
\end{proof}

\section{Some Implications}
\begin{prop}
For $X_{T}$ is a continuous Martingale, then we have
\begin{equation}
E(M_{T}^2)+2X_{0}^2 \le 4E(X_{T}^2).
\end{equation}
\end{prop}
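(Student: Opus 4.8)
The plan is to specialize the generalized Doob inequality (inequality~\eqref{ineq:jian.doob}) to the case $p=1$. Setting $p=1$ in that theorem gives exactly $E(M_{T}^{2})+2X_{0}^{2}\le 4\,E(X_{T}^{2})$, since $\frac{p+1}{p}=2$ and $\left(\frac{p+1}{p}\right)^{p+1}=2^{2}=4$. So the proposition is nothing more than a direct substitution, and the only work is checking that the hypotheses of the generalized inequality are met in this special case.

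First I would verify that the integrability condition $\int_{0}^{T}M_{t}^{2p}\,d[X,X]_{t}<\infty$ becomes $\int_{0}^{T}M_{t}^{2}\,d[X,X]_{t}<\infty$ when $p=1$, and note that this is precisely the hypothesis already appearing in Theorem~\ref{th:jian.main.p=2}, so it is the natural assumption to carry over. Then I would point out that the proposition as stated assumes $X_{T}$ is a continuous martingale, whereas the generalized inequality was proved for a nonnegative continuous sub-martingale. Since every martingale is in particular a sub-martingale, the martingale hypothesis is stronger than needed for the inequality direction, so the implication still goes through.

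The one genuine subtlety, and the step I expect to be the main obstacle, is the nonnegativity requirement. The generalized maximal inequality was stated for a \emph{nonnegative} sub-martingale, yet the proposition drops the nonnegativity assumption and asks only for a continuous martingale. I would resolve this by appealing to the identity in Theorem~\ref{th:jian.main.p=2}, which explicitly does not require positivity at $p=1$: that identity gives $E(X_{T}M_{T})=\tfrac{1}{2}E(M_{T}^{2})+\tfrac{1}{2}X_{0}^{2}$ for a general continuous martingale. From this identity I would then rerun the H\"older-and-optimization argument of the generalized theorem with $p=1$, replacing the sub-martingale inequality~\eqref{ineq:jian8} by the exact identity, which yields the same final bound.

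Concretely, the plan is to bound $E(X_{T}M_{T})\le E(X_{T}^{2})^{1/2}E(M_{T}^{2})^{1/2}$ by Cauchy--Schwarz, insert the free parameter $\ep$ as in the proof of the generalized theorem, apply the Young-type inequality $a^{1/2}b^{1/2}\le\tfrac{1}{2}a+\tfrac{1}{2}b$, substitute the identity from Theorem~\ref{th:jian.main.p=2}, and finally optimize over $\ep\in(0,1)$ with the minimizer $\ep=1/2$. This reproduces the coefficient $4=\left(\frac{p+1}{p}\right)^{p+1}\big|_{p=1}$ and the term $2X_{0}^{2}$, completing the proof without ever invoking nonnegativity.
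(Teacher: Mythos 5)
Your proposal is correct and, once the nonnegativity issue is resolved, follows essentially the same route as the paper: the paper also invokes the $p=1$ identity of Theorem~\ref{th:jian.main.p=2} (valid for general continuous martingales) and then bounds $2E(X_{T}M_{T})\le \tfrac{1}{2}E(M_{T}^{2})+2E(X_{T}^{2})$, which is exactly your $\ep=1/2$ weighted Cauchy--Schwarz/Young step written out directly. Your explicit identification of why the generalized theorem cannot be cited verbatim (its nonnegativity hypothesis) is a worthwhile clarification, but the substance of the argument is the same.
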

\begin{proof}
For the martingale $X_{t}$, by Theorem~\ref{th:jian.main.p=2},
\begin{align*}
E(M_{T}^{2})+X_{0}^{2}& =2E\left(X_{T}M_{T}\right)\\
&\le \frac{1}{2}E(M_{T}^2)+2E(X_{T}^2)
\end{align*}
and arranging terms will prove the inequality.
\end{proof}

It is interesting to compare with the the classical result which
only gives
\begin{equation}
E(M_{T}^2) \le 4E(X_{T}^2).
\end{equation}
If we use again H\"{o}lder inequality, we will get
\begin{equation}
E(M_{T})^2 \le E(M_{T}^2) \le 4E(X_{T}^2).
\end{equation}
and consequently, we have
\begin{equation}
E(M_{T})\le 2 E(X_{T}^2)^{\frac{1}{2}}.
\end{equation}

We can in fact get stronger result by using the
Identity~\ref{eq:jian.main}.
\begin{prop}
For  continuous martingale $X_{T}$, we have
\begin{equation}
E(M_{T})\le \sqrt {2}E(X_{T}^2)^{\frac{1}{2}}.
\end{equation}
\end{prop}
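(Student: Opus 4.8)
The plan is to avoid the crude bound $E(M_T)\le E(M_T^2)^{1/2}$, which (via the preceding proposition) only reproduces the constant $2$, and instead to work with the nonnegative gap $M_T-X_T$. The starting point is the identity of Theorem~\ref{th:jian.main.p=2}, namely $E(X_TM_T)=\frac{1}{2}E(M_T^2)+\frac{1}{2}X_0^2$, which is available in the present setting because $X_t$ is a continuous martingale and the case $p=1$ does not require nonnegativity.

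First I would expand $E\big((M_T-X_T)^2\big)=E(M_T^2)-2E(X_TM_T)+E(X_T^2)$ and substitute the identity for the cross term. The two $E(M_T^2)$ contributions cancel exactly, leaving the clean relation
\begin{equation*}
E\big((M_T-X_T)^2\big)=E(X_T^2)-X_0^2.
\end{equation*}
This cancellation is the crux of the argument: it trades the maximal quantity for the second moment of the terminal value, reduced by the squared initial position.

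Next I would decompose $M_T=X_T+(M_T-X_T)$ and take expectations. Since $X_t$ is a martingale, $E(X_T)=X_0$, so $E(M_T)=X_0+E(M_T-X_T)$. Because $M_T-X_T\ge 0$, Cauchy--Schwarz against the constant $1$ gives $E(M_T-X_T)\le E\big((M_T-X_T)^2\big)^{1/2}=\big(E(X_T^2)-X_0^2\big)^{1/2}$, whence $E(M_T)\le X_0+\big(E(X_T^2)-X_0^2\big)^{1/2}$.

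Finally, applying the elementary inequality $(u+v)^2\le 2(u^2+v^2)$ with $u=X_0$ and $v=\big(E(X_T^2)-X_0^2\big)^{1/2}$ collapses the right-hand side to $\sqrt{2}\,E(X_T^2)^{1/2}$, which is the claim (and the case $X_0<0$ only makes the bound easier). The only genuine care needed is the integrability hypothesis ensuring the identity applies and all expectations are finite; once the cancellation in the second step is spotted, the remaining estimates are routine, and the constant $\sqrt{2}$ is sharp, being attained in the balanced case $X_0^2=E(X_T^2)-X_0^2$.
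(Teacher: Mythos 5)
Your argument is correct and is essentially identical to the paper's own proof: both use the $p=1$ identity $E(X_TM_T)=\tfrac12 E(M_T^2)+\tfrac12 X_0^2$ to obtain $E\bigl((M_T-X_T)^2\bigr)=E(X_T^2)-X_0^2$, then bound $E(M_T)\le X_0+\bigl(E(X_T^2)-X_0^2\bigr)^{1/2}$ by Cauchy--Schwarz and finish with $(u+v)^2\le 2(u^2+v^2)$. Your explicit remarks on the sign of $X_0$ and on integrability are minor refinements the paper leaves implicit; only the parenthetical sharpness claim is slightly glib, since equality would also require equality in the Cauchy--Schwarz step.
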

\begin{proof}
Let $p=1$ in the identity~\ref{eq:jian.main}, we have
\begin{equation*}
E(M_{T}^2)+X_{0}^2 =2E(X_{T}M_{T})
\end{equation*}
which is equivalent to
\begin{equation}\label{eq:jian7}
E((M_{T}-X_{T})^2)=E(X_{T}^2)-X_{0}^2.
\end{equation}
Use H\"{o}lder inequality, we have
\begin{equation}\label{ineq:jian5}
E(M_{T})-E(X_{T}) \le \sqrt{E(X_{T}^2)-X_{0}^2},
\end{equation}
hence,
\begin{equation}
E(M_{T}) \le X_{0}+ \sqrt{E(X_{T}^2)-X_{0}^2}.
\end{equation}
Now use the inequality $(a+b)^2 \le 2(a^2+b^2)$, we have
\begin{equation}
E(M_{T})^2  \le 2E(X_{T}^2).
\end{equation}
which is what we want.
\end{proof}

\begin{prop}
When $X_{T}$ is a  martingale and $X_{0}=0$, then we have
\begin{equation}\label{ineq:jian6}
E(M_{T})\le E(X_{T}^2)^{\frac{1}{2}}.
\end{equation}
\end{prop}
\begin{proof}
In the Inequality ~\ref{ineq:jian5}, take $X_{0}=0$. It is
evident.
\end{proof}


\begin{thebibliography}{99}

\bibitem {} Ioannis Karatzas, Steven  Shreve
 {\it Brownian Motion and Stochastic Calculus}, Springer Verlag, 1991.
\bibitem {} Steven Shreve {\it Stochastic Calculus Models for Finance} Springer Verlag,
2002.
\bibitem {} Daniel Revuz, Marc Yor {\it Continuous Martingales and Brownian Motion} Springer Verlag, 2002.


\end{thebibliography}
\end{document}